\newtheorem{theorem}{Theorem} 
\newtheorem*{theorem*}{Theorem}
\newtheorem{lemma}[theorem]{Lemma}
\theoremstyle{remark}
\newtheorem{remark}[theorem]{Remark}
\newcommand{\la}{\lambda}
\newcommand{\eps}{\varepsilon}
\newcommand{\all}{\mathbf{+}}
\newcommand{\nul}{\zugzwang}
\newcommand{\rr}{\mathbf{--}}
\newcommand{\uu}{\mathbf{|}}
\newcommand{\ur}{\mathbf{ \lefthalfcap }}
\newcommand{\ru}{\mathbf{ \righthalfcup }}
\newcommand{\s}{\mathcal S}
\DeclareMathOperator{\typ}{type}
\DeclareMathOperator{\Th}{Th}
\DeclareMathOperator{\Si}{Sign}
\title[Dynamics on the 6-vertex model]
{An irreversible local Markov chain that preserves the six vertex model on a torus}
\author{Alexei Borodin}
\address[Alexei Borodin]{ Massachusetts Institute of Technology, Cambridge, USA, and Institute for
Information Transmission Problems of Russian Academy of Sciences, Moscow, Russia. E-mail: borodin@math.mit.edu}
\author{Alexey Bufetov}
\address[Alexey Bufetov]{ Massachusetts Institute of Technology, Cambridge, USA, and International Laboratory of Representation Theory and Mathematical Physics, National Research University Higher School of Economics, Moscow, Russia. E-mail: alexey.bufetov@gmail.com}
\begin{document}

\begin{abstract}
We construct an irreversible local Markov dynamics on configurations of up-right paths on a discrete two-dimensional torus, that preserves the Gibbs measures for the six vertex model. An additional feature of the dynamics is a conjecturally nontrivial drift of the height function.

\end{abstract}

\maketitle

\section{Introduction} Random growth models is a rapidly developing subject that focuses on studying large-time behavior of randomly growing interfaces in $(d+1)$ dimensions. The growth mechanism is usually assumed to be local, in the sense that distant parts of the interface evolve (almost) independently, to conform with the belief that most inter-molecular interaction mechanisms in nature are local. 

Not surprisingly, the best understood case is $d=1$. Many results are available including the hydrodynamic (law of large numbers) behavior for certain classes of systems, see e.g. the book of Kipnis-Landim \cite{KipnisLandim} and reference therein, as well as fluctuations for certain \emph{integrable} systems in the KPZ universality class, see e.g. Ferrari--Spohn \cite{FerrariSpohn}, Corwin \cite{Corwin-KPZ}, and references therein. 

One key fact that is very useful in the (1+1)-dimensional situation is that the steady states of the growth model often turn out to be described by product-measures. The lack of such a simple structure in higher dimensions is one obstacle for studying the $d>1$ case. Furthermore, it is a challenge to find any irreversible (or driven) Markov evolution of an interface in $(d+1)$-dimensions with $d>1$ for which one could describe the steady states. \footnote{We omit here the case of heat bath or Glauber-type dynamics that are typically reversible, have asymptotically zero drift, and belong to different universality classes than the irreversible examples.} See Gates-Westcott \cite{GatesWestcott} for a notable exception. 

Recent developments in studies of integrable probabilistic models brought up new examples. In \cite{BorFer}, \cite{BorFer2}, Borodin-Ferrari constructed a local random growth model of stepped surfaces in $(2+1)$d for which the asymptotic growth velocity (as well as local correlations and global Gaussian fluctuations) were computed explicitly, see also Borodin-Bufetov-Olshanski \cite{BorodinBufetovOlshanski} for a generalization that includes similar results for various initial conditions. 

Very recently, Toninelli \cite{Toninelli} proved that the dynamics of Borodin-Ferrari, as well as its generalization that acts on dimers on the square lattice, can be correctly defined on translation-invariant random surfaces over the whole plane. 
He also proved that members of a particular two-dimensional family of such random surfaces are steady states for the resulting dynamics. In fact, this two-dimensional family exactly coincides with the two-dimensional family of translation-invariant Gibbs measures for dimers on either hexagonal or square lattice, cf. Kenyon-Okounkov-Sheffield \cite{Kenyon-Okounkov-Sheffield}, Sheffield \cite{Sheffield}. While checking similar statements on the torus in this case is fairly straightforward, extending to the whole plane is very nontrivial and requires substantial efforts.

The correlation functions of the measures from this two-dimensional family are determinantal. These measures can be thought of as originating from free-fermion (dimer) models. In the last few years certain progress has been achieved in understanding random growth models in (1+1)d that are not free fermionic (but integrable), see e.g. \cite{BorPetLectures} and references therein. It is enticing to try to expand this progress into the domain of (2+1)-dimensional models. 

One step in this direction was taken by Corwin-Toninelli \cite{CorwinToninelli}. They used a triangular array dynamics from Borodin-Corwin \cite{BorCor} (which generalized  \cite{BorFer} and \cite{Bor-schur} to a non-free fermion case) to define an irreversible Markov dynamics on particles living on a two-dimensional torus. They found that Gibbs measures with respect to a particular explicit $q$-weighting are preserved by the dynamics. 

In this work we do something similar, but the extension of \cite{BorFer}, \cite{Bor-schur} is different, and it is not a part of the Macdonald processes framework of \cite{BorCor}. 

We construct an explicit irreversible Markov dynamics on up-right path configurations on a discrete two-dimensional torus. Such path configurations can be viewed as level lines of a {height function} whose plot is the evolving interface. Our main result states that the Gibbs measures of the celebrated six vertex model are invariant with respect to our dynamics. 

The literature on the six vertex model is vast, and we will not even try to survey it. An interested reader could consult e.g. Palamarchuk-Reshetikhin \cite{PalamarchukReshetikhin} and references therein. Its Gibbs measures on a torus are parametrized by two nonnegative integers (in addition to the weights of the six vertex model itself). 

There remains an open and very interesting question to see if the dynamics defined in this paper can be extended to the whole plane while preserving the two-dimensional family of the six vertex translation invariant Gibbs measures. Many things are known about such Gibbs measures but most of them are conjectural,  although widely accepted in the physics literature. One might hope that having an explicit dynamics that preserves them could shed more light on their properties. 

In the text below we give a verification proof of our main result. A brief explanation of its derivation is given in the Appendix; it is based on properties of a family of rational symmetric functions introduced in \cite{Bor}. Curiously, while the positivity of the measures participating in the derivation requires that the parameters of the six vertex model lie in the so-called ferro-electric region, the final result knows nothing about this restriction --- the  Markov dynamics is positive and preserves the Gibbs measures for arbitrary nonnegative values of the six vertex weights of the model. 

\subsection*{Acknowledgments}

A. Borodin was partially supported by the NSF grant DMS-1056390. A. Bufetov was partially supported by "Dynasty" foundation, the RFBR grant 13-01-12449, and by the Government of the Russian Federation within the framework of the implementation of the 5-100 Programme Roadmap of the National Research University Higher School of Economics.

\section{Preliminaries}

Let $\Th$ be the square grid on a two-dimensional torus of size $M \times N$. It can be represented by the (conventional) grid $\{0,1,\dots, M\} \times \{0,1,\dots, N\}$ with the identification of vertices $(0,i) = (M,i)$ and $(i,0)=(i,N)$, $i=0,1, \dots$. We say that two vertices are adjacent if they connected by an edge in $\Th$; we say that the vertices are \textit{h-adjacent} if this edge is horizontal (the first coordinate changes) and \textit{v-adjacent} if this edge is vertical.
We will use the circular order in the horizontal and vertical directions that is inherited from the sets $\{0,1,\dots, M-1\}$ and $\{0,1,\dots, N-1\}$. All relevant operations and notions (like ``$a+1$'' or ``the top vertex from v-adjacent vertices'', etc) are taken in accordance with these circular orders.

A \textit{state} of a 6-vertex model on $\Th$ is a subset of edges from $\Th$ such that each vertex has one of 6 allowed types. We will think of a horizontal edge from this subset as an arrow directed to the right and of a vertical edge as an arrow directed to the top. The 6 allowed types of vertices and our notation for them are shown in Figure \ref{fig:6types}, and an example of a state is shown in Figure \ref{fig:state}. To each of these types of vertices we assign a real positive \textit{weight}; we denote these numbers by $w(\nul)$, $w(\all)$, $w(\ur)$, $w(\ru)$, $w(\uu)$, and $w(\rr)$. The \textit{weight} of a state is defined as the product of weights of all vertices from $\Th$.

\begin{figure}
\includegraphics[width=12cm]{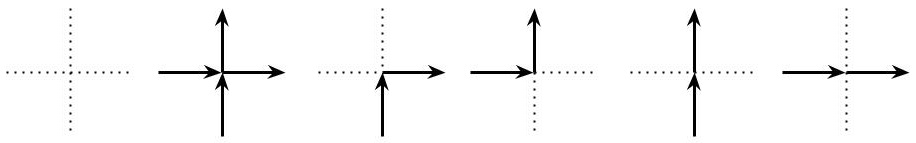}
\caption{We denote these 6 types of vertices by symbols $\nul$, $\all$, $\ur$, $\ru$, $\uu$, $\rr$, respectively. }
\label{fig:6types}
\end{figure}

\begin{figure}
\includegraphics[width=12cm]{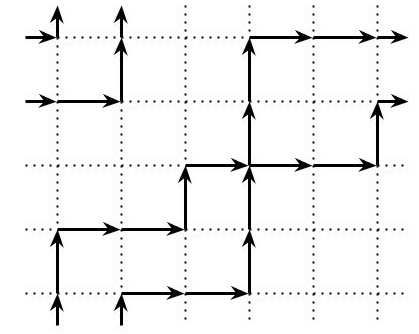}
\caption{An example of a state on $\Th$ with $M=6$ and $N=5$.}
\label{fig:state}
\end{figure}

For a state $s$, let $N_s (X)$ denote the number of vertices of the type $X$ in $s$ (here $X$ can be one of 6 allowed types of vertices). For a vertex $A$ from $\Th$ we denote by $\typ_s (A)$ the type of this vertex in the state $s$.

It is clear that the number of horizontal arrows in any state equals $M k_1$, for some $k_1 \in \mathbb N \cup \{0\}$, and the number of vertical edges equals $N k_2$, for some $k_2 \in \mathbb N \cup \{0\}$. Let $\s_{k_1,k_2}$ be the set of all states with fixed numbers $k_1, k_2$. The \textit{Gibbs measure} on $\s_{k_1,k_2}$ is defined by
$$
\mathrm{Prob} (s) := \frac{w(s)}{\sum_{s \in \s_{k_1,k_2}} w(s)}, \qquad s \in \s_{k_1,k_2}.
$$
If all vertex weights are positive, then this is a probability measure on $\s_{k_1,k_2}$. Though all 6 weights of vertices can be arbitrary positive numbers, this measure depends on only two free parameters; see the end of Section \ref{sec:description} for more details.

\section{Description of the dynamics}
\label{sec:description}

The goal of this section is to describe our dynamics $\mathcal D$. This will be a continuous time Markov chain on the finite set $\s_{k_1,k_2}$. We will consider the case $1 \le k_1 \le N-1, 1 \le k_2 \le M-1$ only. In the remaining cases $k_1=0,N$ or $k_2=0,M$ the set of states becomes rather poor, and our dynamics degenerates to a one-dimensional dynamics (ASEP) on it.

For the description it is enough to define the matrix elements $p(s_1,s_2)$, $s_1, s_2 \in \s_{k_1,k_2}$ of the generator of our Markov dynamics. Since $k_1, k_2$ are fixed, we will omit the dependence on them in the sequel.

\begin{figure}
\includegraphics[width=12cm]{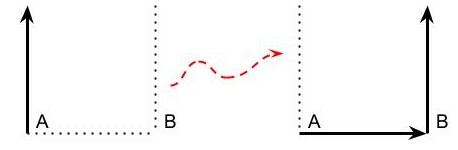}
\caption{Jumps to the right happen when there are two vertices with such a structure of arrows; arrows on edges not in the picture can be arbitrary. The four different cases are shown in Figures \ref{fig:rightJump-1} --- \ref{fig:rightJump-4}. }
\label{fig:rightJump-common}
\end{figure}

Let us proceed with a formal description. A jump to the right can occur when there is a configuration of arrows depicted in the left part of Figure \ref{fig:rightJump-common} (here and in the sequel we assume that arrows on the edges that are not present in the figures can be arbitrary). We want to single out all types that vertices $A$ and $B$ from Figure \ref{fig:rightJump-common} can have.
Let us call the pairs of h-adjacent vertices depicted in the left panels of Figures \ref{fig:rightJump-1} --- \ref{fig:rightJump-4} \textit{r-admissible pairs}.

In a similar vein, we call the pairs of h-adjacent vertices that are depicted on the left panels of Figures \ref{fig:leftJump-1} --- \ref{fig:leftJump-4} \textit{l-admissible pairs}.

Let us define the possible transform of a state $s \in \s$; there are two types of such transforms --- for jumps to the right and for jumps to the left.

Let us start with describing the possible transforms that correspond to the right jumps. Let $A$ and $B$ be an h-adjacent pair of vertices from Figures \ref{fig:rightJump-1} --- \ref{fig:rightJump-4}: Let us denote by $C$ the vertex at the same vertical line as $A$ such that all vertices between $A$ and $C$ have type $\uu$ and $C$ has type $\ur$ or $\all$. Note that such a vertex always exist since $k_1 \ge 1$ by our assumption, i.e. there should be a horisontal arrow in $s$ that starts on the vertical line which contains $A$. Let $D$ be the vertex that is h-adjacent to $C$ and is in the same vertical line as B (see the left panel of Figure \ref{fig:rightJump-column} for the pictorial explanation of these definitions).

The possible transform corresponding to the jump to the right changes the edges in the rectangle ABCD as shown in Figure \ref{fig:rightJump-column}; the state of all other edges remains the same.

Let us now describe the possible transforms that correspond to the jumps to the left. Let $A$ and $B$ be an h-adjacent pair of vertices from Figures \ref{fig:leftJump-1} --- \ref{fig:leftJump-4}: Denote by $C$ the vertex at the same vertical line as $A$
such that all vertices between $A$ and $C$ have the type $\rr$ and $C$ has the types $\nul$ or $\ru$. Note that such a vertex always exist since $k_1 \le N-1$ by our assumption, i.e. there should be a horisontal edge free of arrows of $s$ that ends on the vertical line which contains $A$. Let $D$ be the vertex that is h-adjacent to $C$ and is in the same vertical line as $B$ (see the left panel of Figure \ref{fig:leftJump-column} for the pictorial explanation of these definitions).

The possible transform corresponding to the jump to the left changes the edges in the rectangle ABCD as shown in Figure \ref{fig:leftJump-column}; the state of all other edges remains the same.

\begin{figure}
\includegraphics[width=14cm]{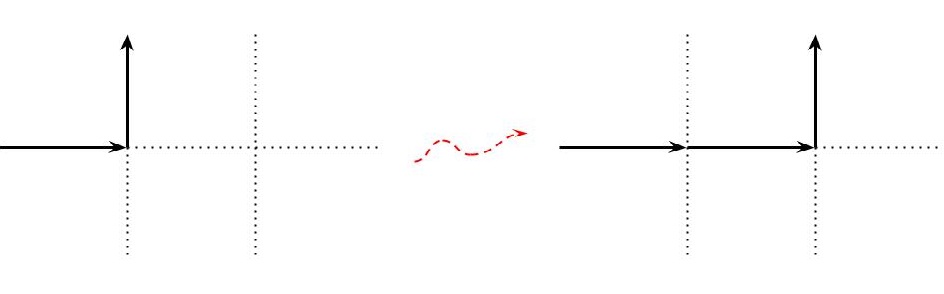}
\caption{Admissible pair; jump to the right; rate = $\sqrt{\frac{w(\rr) w(\uu)}{w(\all) w(\nul)}}$.}
\label{fig:rightJump-1}
\end{figure}

\begin{figure}
\includegraphics[width=14cm]{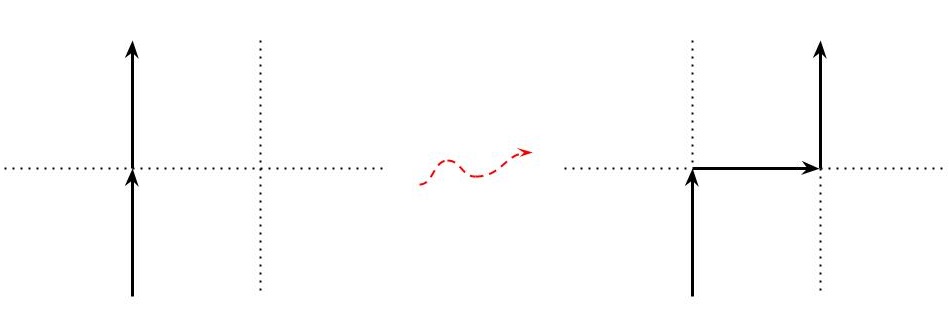}
\caption{Admissible pair; jump to the right; rate = $\frac{w(\ru) w(\ur)}{\sqrt{w(\all) w(\nul) w(\uu) w(\rr)}}$.}
\label{fig:rightJump-2}
\end{figure}

\begin{figure}
\includegraphics[width=14cm]{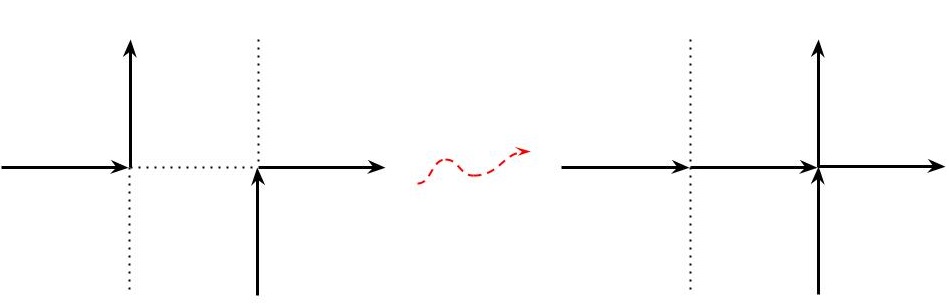}
\caption{Admissible pair; jump to the right; rate = $\frac{\sqrt{w(\all) w(\nul) w(\uu) w(\rr)}}{w(\ru) w(\ur)}$.}
\label{fig:rightJump-3}
\end{figure}

\begin{figure}
\includegraphics[width=14cm]{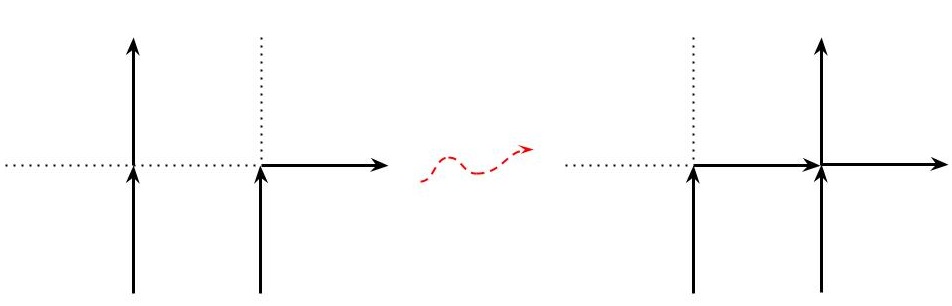}
\caption{Admissible pair; jump to the right; rate = $\sqrt{\frac{w(\all) w(\nul)}{w(\rr) w(\uu)}}$ .}
\label{fig:rightJump-4}
\end{figure}


\begin{figure}
\includegraphics[width=14cm]{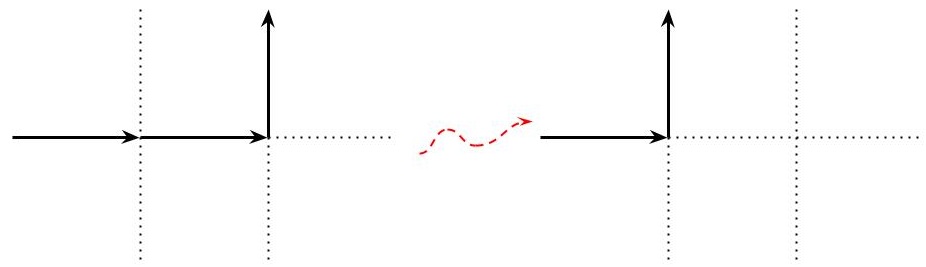}
\caption{Admissible pair; jump to the left; rate = $\sqrt{\frac{w(\all) w(\nul)}{w(\rr) w(\uu)}}$.}
\label{fig:leftJump-1}
\end{figure}

\begin{figure}
\includegraphics[width=14cm]{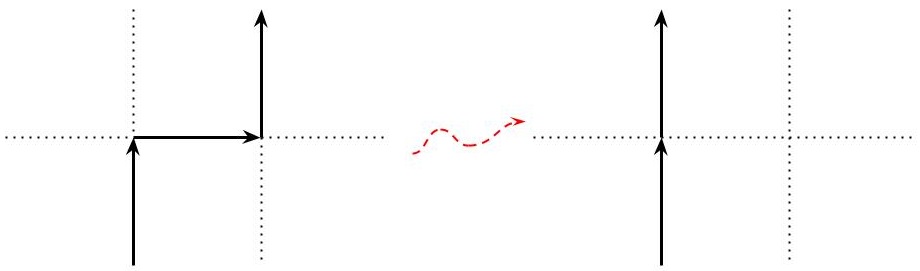}
\caption{Admissible pair; jump to the left; rate = $\frac{\sqrt{w(\all) w(\nul) w(\uu) w(\rr)}}{ w(\ru) w(\ur)}$.}
\label{fig:leftJump-2}
\end{figure}

\begin{figure}
\includegraphics[width=14cm]{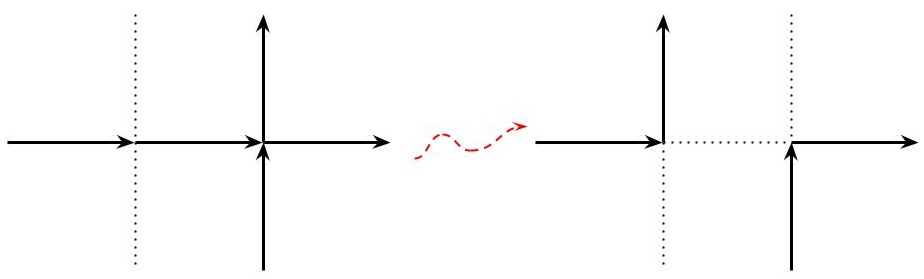}
\caption{Admissible pair; jump to the left; rate = $\frac{w(\ru) w(\ur)}{\sqrt{w(\all) w(\nul) w(\uu) w(\rr)}}$.}
\label{fig:leftJump-3}
\end{figure}

\begin{figure}
\includegraphics[width=14cm]{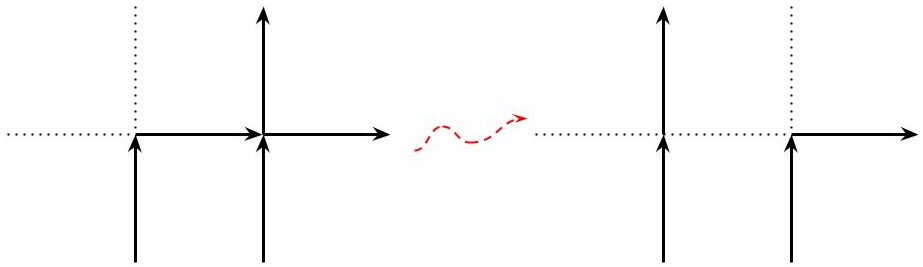}
\caption{Admissible pair; jump to the left; rate = $\sqrt{\frac{w(\rr) w(\uu)}{w(\all) w(\nul)}}$.}
\label{fig:leftJump-4}
\end{figure}

\begin{figure}
\includegraphics[width=14cm]{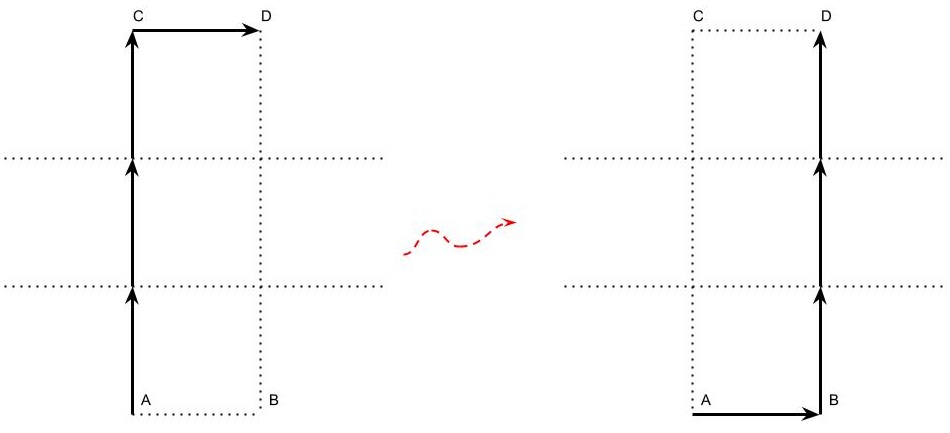}
\caption{Jump to the right results in such a change of vertices. The vertices between A and C can have type $\uu$ only; however, there can be arbitrarily many of them. Not depicted arrows can be arbitrary and remain the same.}
\label{fig:rightJump-column}
\end{figure}

\begin{figure}
\includegraphics[width=14cm]{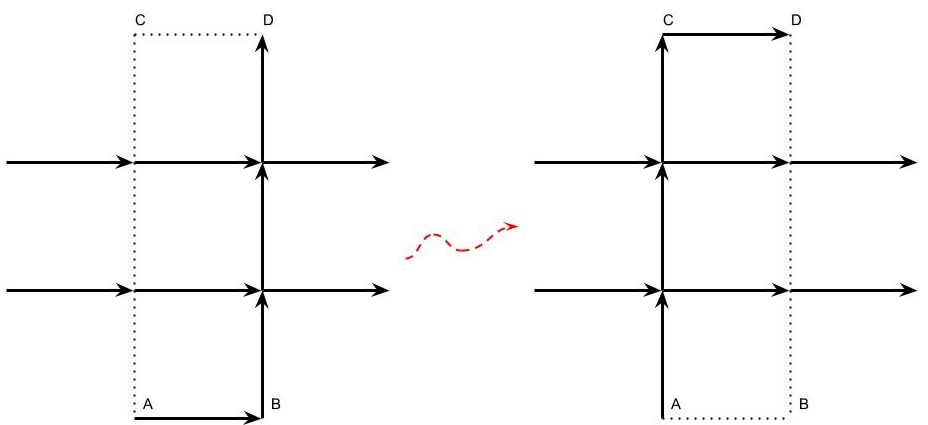}
\caption{Jump to the left results in such a change of vertices. The vertices between A and C can have type $\uu$ only; however, there can be arbitrarily many of them. Not depicted arrows can be arbitrary and remain the same.}
\label{fig:leftJump-column}
\end{figure}

Now we are in a position to define our Markov dynamics $\mathcal D$.
The transition rate of our dynamics $p(s_1 \to s_2)$, $s_1, s_2 \in \s$, is positive if and only if one can chose an admissible pair of vertices in $s_1$ such that the corresponding possible transform turns $s_1$ into $s_2$. The value of $p( s_1 \to s_2)$ depends only on the type of admissible pairs and is given in Figures \ref{fig:rightJump-1} --- \ref{fig:leftJump-4} for each of them.

Let us comment on our exact formulas for the transition rates.

First, note that the $r$-admissible and $l$-admissible pairs are in the natural bijection: The jump to the right in Figure \ref{fig:rightJump-1} ( \ref{fig:rightJump-2}, \ref{fig:rightJump-3}, \ref{fig:rightJump-4}) is exactly the reverse to the jump to the left on Figure \ref{fig:leftJump-1} (\ref{fig:leftJump-2}, \ref{fig:leftJump-3}, \ref{fig:leftJump-4}, respectively). The jump rates for corresponding pairs are also inverse to each other. Second, the four different jump rates for the jumps to the right are also split into two pairs which are inverses of each other. Note also that the values of jump rates depend only on local properties of a state. However, the jumps themselves can propagate upwards arbitrarily far.

The statements of the next two paragraphs will not be used in the sequel, but it is natural to address them here.

There is another bijection between jumps to the right and to the left. For a state $s$ one can define a \textit{dual} state which is defined as the set of all edges except for the edges from $s$. Then the jumps to the left (Figures \ref{fig:leftJump-1}, \ref{fig:leftJump-2}, \ref{fig:leftJump-3}, \ref{fig:leftJump-4}) are in correspondence with the jumps of the dual state to the right ( Figures \ref{fig:rightJump-4}, \ref{fig:rightJump-3}, \ref{fig:rightJump-2}, \ref{fig:rightJump-1}, respectively). The jump rates for corresponding pairs coincide.

The jump rates for any admissible pair can be computed with the use of dual states. The recipe is: Multiply the weights of the pair of vertices and their weights in the dual state (four factors in total) after the transform and divide by the same product computed before the transform; then take the square root. For example, for an admissible pair from Figure \ref{fig:rightJump-1} this computation looks as follows. After the transform (the right panel of Figure \ref{fig:rightJump-1}) we need to take $w(\rr) w(\ru)$ --- the weight of these vertices --- and then multiply it to $w(\uu) w(\ur)$ --- the weight of these vertices in the dual state. Analogously, before the transform (the left panel of Figure \ref{fig:rightJump-1}) we obtain $w(\ru) w(\nul) \times w(\ur) w(\all)$. Dividing these quantities and taking the square root we obtain the jump rate for this admissible pair from the caption of Figure \ref{fig:rightJump-1}.

We have defined all jump rates for our dynamics; thus, our description of the dynamics $\mathcal D$ is complete now. Let us add more comments.

We operate with 6 weights for vertices without any constraints; however, different weights can give the same Gibbs measure. We have 4 equalities

1) $N_s (\all) + N_s (\ru) + N_s (\rr) = const$ --- because all configurations must have the same number of horizontal arrows.

2) $N_s (\all) + N_s (\ru) + N_s (\uu) = const$ --- because all configurations must have the same number of vertical arrows.

3) $N_s (\nul) + N_s (\all) + N_s (\ru) + N_s (\ur) + N_s (\rr) + N_s (\uu) = const$ --- because the number of all vertices is the same for all configurations.

4) $N_s (\ru) - N_s (\ur) = 0$ --- see the proof of Lemma \ref{lem:flip-weight} below.

This gives rise to 4 possible types of transforms of weights that produce the same Gibbs measure. They are as follows:

1) Multiply $w(\all)$, $w(\ru)$, and $w(\rr)$ by $C_1$.

2) Multiply $w(\all)$, $w(\ru)$, and $w(\uu)$ by $C_2$.

3) Multiply the weights of all vertices by $C_3$.

4) Multiply $w(\ru)$ by $C_4$ and multiply $w(\ur)$ by $C_4^{-1}$.

One easily checks that our jump rates are invariant under these transforms; it means that for the same Gibbs measure we have the same dynamics.

We also note that since we have 6 weights for the types of vertices and 4 possible transforms of them, then in fact we have only 2 free parameters of the model coming from weights; we also have two integer parameters $k_1$ and $k_2$. We have no further restrictions: If the weights of vertices are positive, then all our jump rates are positive as well.

\section{Invariance of the Gibbs measure}

In this section we establish the main theorem of the present paper, Theorem \ref{th:main} below.
We start with some necessary notions and lemmas.

For any state $s \in \s$ we define a state $\bar s$ which we call the \textit{flip} of $s$. By definition, this is the state which is obtained from $s$ by changing all arrows into the opposite direction and by using the opposite circular order in both horizontal and vertical directions. In other words, we rotate the grid into 180 degrees and change the direction of all arrows. See an example in Figure \ref{fig:flip}.

\begin{figure}
\begin{center}
\noindent{\scalebox{0.6}{\includegraphics{dyn6vert-state.jpg}}} \hskip 1cm {\scalebox{0.6}{\includegraphics{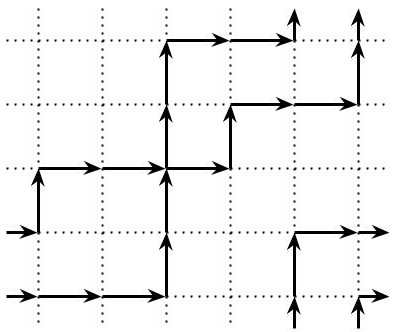}}}
\caption{An example of a state and its flip on a torus grid.}
\label{fig:flip}
\end{center}
\end{figure}

\begin{lemma}
\label{lem:flip-weight}
For any $s \in \s$ we have $w(s) = w( \bar s)$.
\end{lemma}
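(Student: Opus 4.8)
The plan is to reduce the weight identity to the purely combinatorial equality $N_s(\ur)=N_s(\ru)$, and then to establish that equality by a loop--decomposition argument on the torus.

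First I would determine how the flip acts on an individual vertex. Since the flip rotates the torus by $180^\circ$ and then reverses every arrow, a rightward horizontal arrow is first sent to a leftward one and then restored to a rightward one; the same happens to upward vertical arrows, while empty edges stay empty. Thus the flip is a bijection of the edge set that preserves horizontal/vertical type and preserves occupancy, carrying each edge to its antipodal edge. At the level of a vertex $A$ at a point $v$ and its image at the antipodal point $v'$, the rotation turns the right (resp.\ top) edge at $v$ into the left (resp.\ bottom) edge at $v'$, so the occupancy pattern $(l,d,r,u)$ of $A$ becomes $(r,u,l,d)$ at $v'$; that is, left$\leftrightarrow$right and bottom$\leftrightarrow$top are interchanged. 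Reading this off for each of the six types one checks at once that $\nul,\all,\uu,\rr$ are fixed, while the two corner vertices $\ur$ and $\ru$ are interchanged.

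Consequently $N_{\bar s}(X)=N_s(X)$ for $X\in\{\nul,\all,\uu,\rr\}$, while $N_{\bar s}(\ur)=N_s(\ru)$ and $N_{\bar s}(\ru)=N_s(\ur)$. Writing $w(s)=\prod_X w(X)^{N_s(X)}$, every factor except those of $w(\ur)$ and $w(\ru)$ already matches between $w(s)$ and $w(\bar s)$, and so
\[
\frac{w(\bar s)}{w(s)}=\left(\frac{w(\ur)}{w(\ru)}\right)^{N_s(\ru)-N_s(\ur)}.
\]
Hence $w(s)=w(\bar s)$ is equivalent to $N_s(\ur)=N_s(\ru)$, which is exactly identity (4) announced in Section~\ref{sec:description}.

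For this last equality I would argue topologically. The six--vertex (ice) rule forces the number of arrows entering each vertex (from the left and from below) to equal the number leaving it (to the right and above), so the occupied directed edges form a graph with equal in-- and out--degree at every vertex; fixing the straight--through routing at each $\all$ vertex decomposes it into edge--disjoint directed closed curves on the torus. Every step along such a curve points either right or up, and the two corner types are precisely the two kinds of turn, one turning from rightward to upward and the other from upward to rightward. Reading a curve as a cyclic word in the two letters $\{\text{right},\text{up}\}$, the number of right$\to$up transitions equals the number of up$\to$right transitions, since in any cyclic binary word these two transitions alternate; summing over all curves gives $N_s(\ur)=N_s(\ru)$. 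I expect this final step to be the only non--formal point: it is where the toroidal topology is genuinely used, and one must resolve the degree--two $\all$ vertices so that ``turn'' is unambiguous. The straight--through routing does this cleanly (and in fact any routing contributes equally to both turn counts, so the conclusion is independent of the choice).
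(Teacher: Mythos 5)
Your proof is correct and follows essentially the same route as the paper: reduce the claim to $N_s(\ur)=N_s(\ru)$ by observing that the flip fixes the four symmetric vertex types and swaps the two corner types, then deduce that equality from the decomposition of the configuration into closed up-right curves, on each of which the two kinds of turns alternate (with the $\all$ vertices resolved so that they contribute equally to both counts). Your version just spells out the cyclic-word and routing details that the paper leaves implicit.
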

\begin{proof}
Note that vertices of types $\all$, $\nul$, $\uu$, and $\rr$ are invariant under the flip, while $\ur$ turns into $\ru$ and vice versa. Therefore, it is enough to prove that $N_s (\ur) = N_s (\ru)$ for any $s \in \s$. But this is clearly visible from the up-right paths structure of $s$: Along each up-right path the number of right turns equals the number of left turns, and each vertex $\all$ contains one left turn and one right turn inside.
\end{proof}

\begin{lemma}
\label{lem:flip-main}
Let $s_1, s_2 \in \s$. If $p(s_1 \to s_2) >0$, then $p(\bar s_2 \to \bar s_1) >0$ as well. Moreover, we have
\begin{equation}
\label{eq:flip-main-prop}
w(s_1) p(s_1 \to s_2) = w(\bar s_2) p (\bar s_2 \to \bar s_1).
\end{equation}
\end{lemma}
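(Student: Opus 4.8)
The plan is to treat the flip $s\mapsto\bar s$ as a reversing symmetry of the dynamics and to read off \eqref{eq:flip-main-prop} from the explicit jump rates together with Lemma~\ref{lem:flip-weight}. Recall from Section~\ref{sec:description} that $p(s_1\to s_2)>0$ exactly when $s_2$ is produced from $s_1$ by a right jump or a left jump on some admissible pair. Since a right jump and its reverse left jump carry reciprocal rates and $\bar{\bar s}=s$, I would first reduce to the case of a right jump: if $s_1\to s_2$ is a left jump, then $s_2\to s_1$ is a right jump, and feeding that into the right--jump case, together with the reciprocity of reverse rates and the identities $w(\bar s_i)=w(s_i)$ of Lemma~\ref{lem:flip-weight}, yields \eqref{eq:flip-main-prop} for $s_1\to s_2$ after a short manipulation. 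So assume $s_1\to s_2$ is a right jump with lower admissible pair $A,B$, upper vertices $C,D$, and the intermediate $\uu$--column as in Figure~\ref{fig:rightJump-column}; write $a_i,b_i,c_i,d_i$ for the types of $A,B,C,D$ in $s_i$.

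The geometric core is to show that the flip turns this transition into another right jump, namely $\bar s_2\to\bar s_1$, whence $p(\bar s_2\to\bar s_1)>0$. The flip is the $180^\circ$ rotation composed with arrow reversal; on single vertices it fixes the types $\nul,\all,\uu,\rr$ and interchanges $\ur\leftrightarrow\ru$ (this is exactly the computation in the proof of Lemma~\ref{lem:flip-weight}). In particular it preserves $\uu$, and since right jumps are precisely those whose column consists of $\uu$'s while left jumps use $\rr$--columns, the flipped picture again carries a $\uu$--column and can only be a right jump. Concretely, because the flip exchanges top and bottom, the $\uu$--column sitting above $B$ in $s_2$ maps to a $\uu$--column in $\bar s_2$ whose bottom vertex is the image of $D$; I would then check that the images of $D$ and $C$ form the lower admissible pair of a right jump in $\bar s_2$ and that the associated transform produces exactly $\bar s_1$. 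I expect this verification --- running through the four cases of Figures~\ref{fig:rightJump-1}--\ref{fig:rightJump-4} and allowing an arbitrarily long column --- to be the main obstacle; everything after it is bookkeeping.

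For the rate identity I would invoke the dual--state recipe of Section~\ref{sec:description}. Write $s^{*}$ for the dual state and $X^{*}$ for the dual type of $X$, so $\nul^{*}=\all$, $\rr^{*}=\uu$, $\ur^{*}=\ru$. The recipe expresses $p(s_1\to s_2)$ through the pair $A,B$ and $p(\bar s_2\to\bar s_1)$ through the pair given by the images of $D,C$, whose types in $\bar s_2,\bar s_1$ are the flipped types $\bar d_i,\bar c_i$. The one identity that reconciles the two expressions is
\[
w(\bar X)\,w\big((\bar X)^{*}\big)=w(X)\,w(X^{*}),
\]
valid for all six types (the only nontrivial check is $X\in\{\ur,\ru\}$, where both sides equal $w(\ur)w(\ru)$). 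Applying it turns $p(\bar s_2\to\bar s_1)$ into the dual--recipe expression in the \emph{unflipped} vertices $C,D$, and a direct computation then shows that \eqref{eq:flip-main-prop}, via $w(\bar s_2)=w(s_2)$, is equivalent to the single equality $w(s_1)/w(s_1^{*})=w(s_2)/w(s_2^{*})$ (the $\uu$--column contributes $\uu\to\nul$ above $A$ and $\nul\to\uu$ above $B$, which cancel in every ratio).

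Finally I would establish that $w(s)/w(s^{*})$ is constant on all of $\s=\s_{k_1,k_2}$, which makes the last equality automatic. Indeed
\[
\frac{w(s)}{w(s^{*})}=\prod_{X}\Big(\frac{w(X)}{w(X^{*})}\Big)^{N_s(X)}=\Big(\frac{w(\nul)}{w(\all)}\Big)^{N_s(\nul)-N_s(\all)}\Big(\frac{w(\rr)}{w(\uu)}\Big)^{N_s(\rr)-N_s(\uu)}\Big(\frac{w(\ur)}{w(\ru)}\Big)^{N_s(\ur)-N_s(\ru)},
\]
and each exponent is fixed on $\s$: one has $N_s(\ur)-N_s(\ru)=0$ by Lemma~\ref{lem:flip-weight}, while $N_s(\all)-N_s(\nul)$ and $N_s(\rr)-N_s(\uu)$ are fixed combinations of the constant numbers of horizontal arrows, vertical arrows, and total vertices recorded in identities (1)--(3) of Section~\ref{sec:description}. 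This closes the right--jump case, and together with the reduction of the first paragraph it proves \eqref{eq:flip-main-prop} in general.
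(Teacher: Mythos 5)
Your proposal is correct, and its skeleton coincides with the paper's: reduce left jumps to right jumps via the reciprocity of reverse rates and $\bar{\bar s}=s$ (the paper does exactly this manipulation), observe that the flip carries a right jump $s_1\to s_2$ to a right jump $\bar s_2\to\bar s_1$ whose admissible pair is the image of $(D,C)$ (the paper's four-case check of the top vertices, Figure \ref{fig:endVert}), and then reduce \eqref{eq:flip-main-prop} to the single ratio identity $p(s_1\to s_2)/p(\bar s_2\to\bar s_1)=w(s_2)/w(s_1)$. Where you genuinely diverge is in how you verify that ratio identity. The paper does it by brute force: a $4\times 4$ table over the possible types of $(A,B)$ and of the flipped $(C,D)$, with one entry worked out explicitly. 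You instead feed the dual-state recipe for the rates into both sides, use $w(\bar X)w((\bar X)^{*})=w(X)w(X^{*})$ to unflip, and reduce everything to the statement that $w(s)/w(s^{*})$ is constant on $\s_{k_1,k_2}$, which follows from the conservation laws (1)--(4); this is a clean conceptual explanation of why the paper's table closes, and it handles all sixteen cases at once (your bookkeeping of the $\uu$-column contributing $\uu\to\nul$ above $A$ and $\nul\to\uu$ above $B$, which cancel in every ratio including the dual one, is also correct). The one caveat is that you are leaning on the dual-state recipe, which the paper states only as a remark ``not used in the sequel'' and verifies for a single admissible pair; to make your argument self-contained you would still need to check that recipe against the eight captioned rates, an easy but finite verification comparable in size to (though more structured than) the table you avoid. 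Likewise, the geometric step you rightly flag as the main obstacle is asserted rather than carried out, but that matches the level of detail in the paper itself.
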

\begin{proof}
Let us consider the case when $s_1$ transforms into $s_2$ by a right jump.

From the construction, the two top vertices --- C and D --- in which the transformation happens can look in 4 possible ways (see Figure \ref{fig:rightJump-column} and Figure \ref{fig:endVert}). It is readily visible that after the flip they exactly coincide with admissible pairs of the jumps to the right from $\bar s_2$ to $\bar s_1$ (left panels of Figures \ref{fig:rightJump-1}, \ref{fig:rightJump-2}, \ref{fig:rightJump-3}, \ref{fig:rightJump-4}, respectively). This proves the first statement of the lemma.

\begin{figure}
\includegraphics[width=15cm]{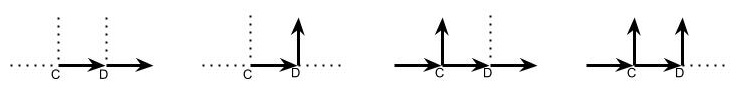}
\caption{The 4 possible ways how the non-depicted arrows in C and D from Figure \ref{fig:rightJump-column} can look like.}
\label{fig:endVert}
\end{figure}

Let us prove now the second claim. Lemma \ref{lem:flip-weight} asserts that $w(\bar s_2) = w(s_2)$. Note that the set of types of vertices between $A$ and $C$, and between $B$ and $D$ remains the same during the transform (see Figure \ref{fig:rightJump-column}). Thus, we have
$$
w(s_2) = w(s_1) \frac{w (\typ_{s_2} (A) ) w(\typ_{s_2} (B) ) w (\typ_{s_2} (C) ) w(\typ_{s_2} (D) )}{w(\typ_{s_1} (A)) w(\typ_{s_1} (B)) w(\typ_{s_1} (C)) w(\typ_{s_1} (D))},
$$
since we need to control the change of types in A,B,C,D only.
Substituting these equalities into \eqref{eq:flip-main-prop} we see that it remains to prove
$$
\frac{p(s_1 \to s_2)}{p(\bar s_2 \to \bar s_1)} = \frac{w (\typ_{s_2} (A) ) w(\typ_{s_2} (B) ) w (\typ_{s_2} (C) ) w(\typ_{s_2} (D) )}{w(\typ_{s_1} (A)) w(\typ_{s_1} (B)) w(\typ_{s_1} (C)) w(\typ_{s_1} (D))}.
$$
This equality can be directly checked from the definitions of the jump rates. We give this verification in the form of a table below.
\begin{equation*}
\begin{pmatrix}
\frac{p(s_1 \to s_2)}{p(\bar s_2 \to \bar s_1)} & \mbox{Figure \ref{fig:rightJump-1}} & \mbox{Figure \ref{fig:rightJump-2}} & \mbox{Figure \ref{fig:rightJump-3}} & \mbox{Figure \ref{fig:rightJump-4}} \\
\mbox{Figure \ref{fig:rightJump-1}} & 1 & \frac{w(\rr) w(\uu)}{w(\ru) w(\ur)} & \frac{w(\ru) w(\ur)}{w(\all) w(\nul)} & \frac{w(\rr) w(\uu)}{w(\all) w(\nul)} \\
\mbox{Figure \ref{fig:rightJump-2}} & \frac{w(\ru) w(\ur)}{w(\rr) w(\uu)} & 1 & \frac{w(\ru)^2 w(\ur)^2}{w(\all) w(\nul) w(\rr) w(\uu)} & \frac{w(\ru) w(\ur)}{w(\all) w(\null)} \\
\mbox{Figure \ref{fig:rightJump-3}} & \frac{w(\all) w(\nul)}{w(\ru) w(\ur)} & \frac{w(\all) w(\nul) w(\rr) w(\uu)}{w(\ru)^2 w(\ur)^2} & 1 & \frac{w(\rr) w(\uu)}{w(\ru) w(\ur)} \\
\mbox{Figure \ref{fig:rightJump-4}} & \frac{w(\all) w(\nul)}{w(\rr) w(\uu)} & \frac{w(\all) w(\nul)}{w(\ru) w(\ur)} & \frac{w(\ru) w(\ur)}{w(\rr) w(\uu)} & 1 \\
\end{pmatrix}
\end{equation*}
The rows of this table correspond to different types of vertices in A and B. The columns correspond to different types of vertices in C and D after the flip. For the entries we simply calculate the quotient of jump rates (which are defined in the caption to the corresponding figures). On the other hand, it can be directly checked that the numerator of the obtained result corresponds to the weights of vertices which appear after the transform, while the denominator correspond to the vertices which disappear after the transform.

For example, if the vertices $A$ and $B$ are as in Figure \ref{fig:rightJump-3}, and the vertices $C$ and $D$ after the flip are as in Figure \ref{fig:rightJump-1}, then in $s_1$ (before the jump) the vertices $A$, $B$, $C$, and $D$ have types $\ru$, $\ur$, $\ur$, $\rr$, respectively, and in $s_2$ (after the jump) they have types $\rr$, $\all$, $\nul$, $\ur$, respectively. Therefore, the quotient of the product of all weights is equal to $\frac{w(\all) w(\nul)}{w(\ru) w(\ur)} $, which equals the corresponding quotient of jump rates.

The verification in other cases is analogous.

The case of jumps to the left can be proved in a similar way. However, it also follows from the case of jumps to the right. Indeed, in the case of jumps to the left all changes of types of vertices in $A$, $B$, $C$, $D$ are inverse to the changes that happen with the corresponding jump to the right (recall that the $r$-admissible pairs and $l$-admissible pairs are in the natural bijection). On the other hand, the jump rates for corresponding jumps are inverse to each other as well. Therefore, we have the desired equality in the case of jumps to the left as well.

\end{proof}

For two types of allowed vertices $X$, $Y$ and a state $s$, let us denote by $N_s (X,Y)$ the number of $h$-adjacent pairs such that the left vertex from this pair has type $X$ and the right vertex from this pair has type $Y$. We are interested in these numbers because the jump rates of our dynamics are determined by pairs of $h$-adjacent vertices.

\begin{lemma}
\label{lem:technical}
For any state $s \in \s$, we have
\begin{equation}
\label{eq:numb-vert-techn}
N_s (\ru, \nul) + N_s ( \rr, \ru ) = N_s (\nul, \ur) + N_s (\ur, \rr).
\end{equation}
\end{lemma}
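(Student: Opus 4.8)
The plan is to prove the identity \eqref{eq:numb-vert-techn} by a local counting argument, tracking how the six vertex types interact along individual rows of the torus. The key observation is that the quantity $N_s(\ru,\nul)+N_s(\rr,\ru)$ counts $h$-adjacent pairs whose \emph{left} vertex is of type $\ru$ or $\rr$, restricted by what sits to its right, while the right-hand side counts pairs whose \emph{right} vertex is of type $\ur$ or $\nul$ with a constraint on its left neighbor. Since both sides are sums over $h$-adjacent pairs, I would fix a single horizontal line of the torus (there are $N$ of them) and prove that the contributions of each line to the left and right sides agree; summing over all lines then yields the claim. The natural bookkeeping device is the horizontal arrow: reading left to right along a cyclic row, the presence or absence of a rightward arrow on each horizontal edge evolves as one passes through vertices, and the six vertex types are exactly the local rules governing how the incoming horizontal arrow (present/absent) combines with the vertical data to produce the outgoing one.

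First I would classify the six vertex types by their effect on the horizontal arrow entering from the left and leaving to the right. Writing $\mathrm{in}(A)$ and $\mathrm{out}(A)$ for the states (arrow present $=1$, absent $=0$) of the left and right horizontal edges at a vertex $A$, one reads off from Figure~\ref{fig:6types}: types $\all$ and $\rr$ have $\mathrm{in}=\mathrm{out}=1$; types $\nul$ and $\uu$ have $\mathrm{in}=\mathrm{out}=0$; type $\ur$ has $\mathrm{in}=0,\mathrm{out}=1$; and type $\ru$ has $\mathrm{in}=1,\mathrm{out}=0$. The crucial consequence is that in any $h$-adjacent pair $(X,Y)$ appearing in a row, the outgoing horizontal edge of $X$ coincides with the incoming horizontal edge of $Y$, so $\mathrm{out}(X)=\mathrm{in}(Y)$. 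This constraint restricts which pairs $(X,Y)$ can actually occur consecutively and is what ties together the four specific pair-types appearing in \eqref{eq:numb-vert-techn}.

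The main step, and the one I expect to require the most care, is to introduce the right marker function and telescope. Along a fixed cyclic row, consider the set $P$ of positions where a vertex is of type $\ru$ (so $\mathrm{in}=1,\mathrm{out}=0$: the horizontal arrow terminates there) and the set $Q$ of positions of type $\ur$ (so $\mathrm{in}=0,\mathrm{out}=1$: the horizontal arrow originates there). Because the row is a closed cycle and the horizontal arrow toggles between present and absent only at $\ur$ and $\ru$ vertices, the positions in $P$ and $Q$ alternate cyclically and $|P|=|Q|$ on each row. I would then express $N_s(\ru,\nul)+N_s(\rr,\ru)$ as a sum over transitions out of a ``horizontal-arrow-absent'' segment and $N_s(\nul,\ur)+N_s(\ur,\rr)$ as the matching sum over transitions into such a segment, so that the two sides count the same boundary events of the maximal blocks of consecutive arrow-absent edges, once from the left end and once from the right end. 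The delicate point is to verify that the four listed pair-types exhaust precisely these boundary cases and that no double counting or omission occurs at vertices of type $\all$, $\uu$, or $\rr$ interior to a block; the cyclic alternation of $P$ and $Q$ is exactly what guarantees the two counts coincide. Summing the resulting per-row equality over all $N$ rows gives \eqref{eq:numb-vert-techn}.
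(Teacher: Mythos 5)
Your reduction to a single horizontal line does not work: identity \eqref{eq:numb-vert-techn} is \emph{not} true row by row, so the per-row equality you plan to establish is false and cannot be summed over rows. A concrete counterexample: take $M=6$, $N=2$, with the two cyclic rows of vertex types
\begin{equation*}
\text{row } 1:\ (\ur,\ \rr,\ \all,\ \ru,\ \uu,\ \uu), \qquad \text{row } 0:\ (\ru,\ \nul,\ \uu,\ \ur,\ \all,\ \all).
\end{equation*}
All horizontal and vertical edges match (you can check this using exactly your in/out classification, plus the analogous classification for vertical edges), so this is a legitimate state. In row $1$ the only pair among the four listed types is $(\ur,\rr)$, so row $1$ contributes $0$ to the left-hand side of \eqref{eq:numb-vert-techn} and $1$ to the right-hand side; row $0$ contributes $1$ and $0$ respectively. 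The totals agree, as the lemma asserts, but the individual rows do not. The source of the failure is precisely the ``delicate point'' you flagged: your horizontal-edge classification determines a vertex only up to the binary ambiguities $\nul$ versus $\uu$ and $\rr$ versus $\all$ (these are resolved by the vertical edges, i.e.\ by the other rows), and the left-hand side tests these ambiguities at the vertices adjacent to a $\ru$ while the right-hand side tests them at the vertices adjacent to the corresponding $\ur$. These are different sites, so there is no per-block or per-row matching: in row $1$ above, the unique arrow-absent segment is bounded by a $\ru$ whose neighbors are $\all$ and $\uu$ (contributing nothing) and by a $\ur$ whose right neighbor is $\rr$ (contributing once to the right-hand side only).

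The decomposition that does localize is the transpose of yours. The paper fixes a pair of adjacent \emph{columns} $\{k,k+1\}\times\{0,\dots,N-1\}$ and classifies each h-adjacent pair by its four \emph{vertical} edges: a pair has type $(\ru,\nul)$ or $(\rr,\ru)$ exactly when both lower vertical edges are empty and at least one upper one carries an arrow, and type $(\nul,\ur)$ or $(\ur,\rr)$ exactly when both upper vertical edges are empty and at least one lower one carries an arrow (here the horizontal compatibility between the two vertices is what excludes all other combinations). These are respectively the top and bottom boundaries of the maximal vertical runs of rows in which both vertical edges are empty, so within each column pair both sides of \eqref{eq:numb-vert-techn} equal the number of such runs, and summing over $k$ gives the lemma. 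If you want to rescue your argument, swap the roles of rows and columns in this way; the row-wise, horizontal-arrow bookkeeping cannot close because of the $\nul/\uu$ and $\rr/\all$ ambiguities exhibited above.
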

\begin{proof}
All quantities in \eqref{eq:numb-vert-techn} depend on two h-adjacent vertices. Recall that our model lives on the grid $\{0, 1, \dots, M-1\} \times \{0,1, \dots, N-1 \}$. Let us consider the part of our grid $\mathrm{Th}_k := \{k, k+1\} \times \{0,1, \dots, N-1\}$, for some $0 \le k < M$. Let us prove that we have the desired equality of pairs of h-adjacent vertices inside $\mathrm{Th}_k$.

We say that $i \in \{0,1, \dots,N-1\}$ is a \textit{good} number if two vertical edges of the form $(k,i) \to (k,i+1)$ and $(k+1,i) \to (k+1,i+1)$ do not contain arrows from $s$. We say that a \textit{block} is a collection of good numbers $\{a,a+1, a+2, \dots, b\}$ such that $a-1$ and $b+1$ are not good numbers. See Figure \ref{fig:blocks} for an example.

\begin{figure}
\includegraphics[height=12cm]{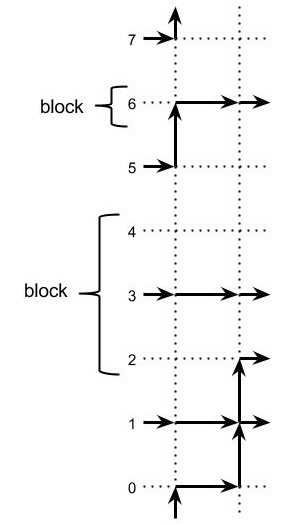}
\caption{An example of configuration in $\Th_k$ and blocks inside it.}
\label{fig:blocks}
\end{figure}

If all numbers from $0$ to $N-1$ are good, then we obviously have the desired equality. Otherwise, note that for each block $\{a,a+1, a+2, \dots, b\}$ the vertices $(k,b+1)$ and $(k+1,b+1)$ must be of the form $(\ru, \nul)$ or $(\rr, \ru)$. On the other hand, the vertices $(k,a)$ and $(k+1,a)$ must be of the form $(\nul, \ur)$ or $(\ur, \rr)$. Therefore, the number of blocks equals the number of configurations both in the left-hand side and in the right-hand side of \eqref{eq:numb-vert-techn}. Thus, we have this equality inside each $\mathrm{Th}_k$. Summing over all $k$, we arrive at the statement of the lemma.
\end{proof}

After the flip of a state, a pair of $h$-adjacent vertices transforms into another pair of $h$-adjacent vertices. The next lemma compares the numbers of a certain pairs of types of $h$-adjacent vertices in a state and in its flip.

\begin{lemma}
\label{lem:count-of-pairs}
Let
$$
A:= N_s (\ru, \nul) - N_s (\nul, \ur).
$$
Then we have
\begin{equation}
\label{eq:lemma4-eq1}
N_s (\uu, \ur) - N_s (\ru, \uu) = A, \qquad  N_s (\nul, \uu) - N_s (\uu, \nul) = A,
\end{equation}
and
\begin{equation}
\label{eq:lemma4-eq2}
N_s (\ur, \rr) - N_s (\rr, \ru) = A, \quad  N_s (\all, \ru) - N_s (\ur, \all) = A, \quad  N_s (\rr, \all) - N_s (\all, \rr) = A.
\end{equation}
\end{lemma}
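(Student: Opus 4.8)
The plan is to mimic the combinatorial counting argument of Lemma~\ref{lem:technical}, restricting attention to a single vertical strip $\mathrm{Th}_k = \{k,k+1\}\times\{0,1,\dots,N-1\}$ and proving each of the five identities strip-by-strip, then summing over $k$. The key observation is that every quantity appearing in \eqref{eq:lemma4-eq1} and \eqref{eq:lemma4-eq2} counts $h$-adjacent pairs, so each such count splits as a sum over the strips $\mathrm{Th}_k$. Within a fixed strip, I would reuse the notion of \emph{good numbers} and \emph{blocks}: a number $i$ is good exactly when neither vertical edge $(k,i)\to(k,i+1)$ nor $(k+1,i)\to(k+1,i+1)$ carries an arrow, and blocks are the maximal runs of good numbers. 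The boundary rows of each block force specific pair-types (as already recorded in Lemma~\ref{lem:technical}: the top boundary is of type $(\ru,\nul)$ or $(\rr,\ru)$, and the bottom boundary is of type $(\nul,\ur)$ or $(\ur,\rr)$), and these are precisely the pairs that enter the definition of $A$.

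The natural first step is to pin down $A$ locally. Inside $\mathrm{Th}_k$, I would show that $N_s(\ru,\nul)-N_s(\nul,\ur)$ equals a sum over blocks of a local contribution that depends only on the arrow content along the two vertical columns bounding the strip. Concretely, I expect $\ru$ and $\nul$ (left vertex) to track transitions in the arrow-occupation of the left column as one moves upward, while $\ur$ (right vertex) tracks transitions in the right column; since both columns contain the same total number of vertical arrows (each column of the torus has a fixed number of up-arrows determined by $k_2$), the discrepancy $A$ measures the net cyclic offset between the two columns' arrow patterns. I would make this precise by assigning to each row $i$ the pair of occupation bits $\bigl(a_i,b_i\bigr)\in\{0,1\}^2$ recording whether the left/right vertical edge above $(k,i)$/$(k+1,i)$ is occupied, and then rewriting each of the six pair-counts in \eqref{eq:lemma4-eq1}--\eqref{eq:lemma4-eq2} as a count of sign-changes or matched transitions in the bit-sequence $(a_i,b_i)$.

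Once every $N_s(X,Y)$ in the statement is expressed as such a transition count, each of the five claimed equalities reduces to a telescoping identity: around the cyclic sequence of rows, the number of times a given local pattern ``opens'' equals the number of times it ``closes,'' up to the common offset $A$. For instance, $N_s(\uu,\ur)-N_s(\ru,\uu)=A$ should follow because $\uu$-vertices correspond to rows where the left column is occupied, and the difference counts the net imbalance between left-column occupation starting versus ending, which is again governed by the same cyclic offset as $A$. I would verify each identity by this bookkeeping, noting that the five equalities are not independent — once the first two of \eqref{eq:lemma4-eq1} are established, the three equalities of \eqref{eq:lemma4-eq2} can be obtained by similar transition counts in the complementary (occupied) rows, or by applying the flip of Lemma~\ref{lem:flip-weight} together with the $\ru\leftrightarrow\ur$ symmetry to pass between the ``empty-column'' and ``full-column'' versions of the count.

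The main obstacle, I expect, is organizing the case analysis cleanly: a vertical edge being occupied or not interacts with the horizontal-arrow content to determine which of the six vertex types sits at each site, so writing down exactly which $(a_i,b_i)\to(a_{i+1},b_{i+1})$ transitions produce each pair-type $(X,Y)$ requires care, and the horizontal-arrow degrees of freedom must be shown to cancel in every difference. The cleanest route is probably to prove the two identities in \eqref{eq:lemma4-eq1} directly via the block/transition argument, establish that $A$ itself admits the column-offset interpretation, and then derive \eqref{eq:lemma4-eq2} as formal consequences using the already-proven relations rather than repeating the strip analysis from scratch.
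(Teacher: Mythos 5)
Your central organizing claim --- that each of the five identities can be ``proven strip-by-strip, then summed over $k$'' --- is false, and this sinks the proposed argument. The identities of Lemma \ref{lem:count-of-pairs} do \emph{not} localize to a single strip $\mathrm{Th}_k$. Concretely, take $M=3$, $N=2$ and the state $s$ with $\typ_s(0,0)=\all$, $\typ_s(0,1)=\uu$, $\typ_s(1,0)=\ru$, $\typ_s(1,1)=\ur$, $\typ_s(2,0)=\ur$, $\typ_s(2,1)=\ru$ (one checks directly that arrows are conserved at every vertex, so this is a valid state with $k_1=1$, $k_2=2$). In the strip $\mathrm{Th}_0$ the two $h$-adjacent pairs are $(\all,\ru)$ and $(\uu,\ur)$, so within that strip $N(\ru,\nul)-N(\nul,\ur)=0$ while $N(\uu,\ur)-N(\ru,\uu)=1$: the first identity of \eqref{eq:lemma4-eq1} fails on a single strip. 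The underlying reason is that these identities pair the $\ru$-vertices of column $k$ against the $\ur$-vertices of column $k+1$, and while the total counts $N_s(\ru)$ and $N_s(\ur)$ agree globally, the per-column counts need not match across adjacent columns; only the sum over all $k$ telescopes to zero. Relatedly, your supporting assertion that ``each column of the torus has a fixed number of up-arrows determined by $k_2$'' is also wrong (in the example above column $0$ carries two vertical arrows while columns $1$ and $2$ carry one each); it is the number of vertical arrows crossing each \emph{horizontal} line that is fixed.

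The paper's actual proof is much lighter and avoids any strip analysis: since a $\ru$-vertex has no outgoing horizontal arrow, its right neighbor must be one of $\nul,\uu,\ur$, and since a $\ur$-vertex has no incoming horizontal arrow, its left neighbor must be one of $\nul,\uu,\ru$; writing $N_s(\ru)$ and $N_s(\ur)$ as the corresponding three-term sums and using $N_s(\ru)=N_s(\ur)$ (from Lemma \ref{lem:flip-weight}) gives the first identity of \eqref{eq:lemma4-eq1}, and counting $N_s(\uu)$ by left and by right neighbors gives the second. The three identities of \eqref{eq:lemma4-eq2} then follow by passing to the dual state (complementing the edge set), which yields them with some common constant $B$ in place of $A$; the equality $A=B$ is exactly the content of Lemma \ref{lem:technical}, which is where the strip/block argument is genuinely needed --- and note that Lemma \ref{lem:technical}'s identity, unlike those of Lemma \ref{lem:count-of-pairs}, \emph{does} hold strip-by-strip. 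Your proposal neither supplies this double-counting step nor any mechanism identifying the constant of the second group with $A$ (the flip, which you invoke for this purpose, relates counts in $s$ to counts in $\bar s$ rather than producing an identity within $s$), so as written the argument cannot be completed without essentially rebuilding the paper's proof.
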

\begin{proof}
In the proof of Lemma \ref{lem:flip-weight} above we showed that $N_s (\ru) = N_s (\ur)$. We have
\begin{align*}
N_s (\ru) = N_s (\ru, \nul) + N_s (\ru, \uu) + N_s (\ru, \ur), \\
N_s (\ur) = N_s (\nul, \ur) + N_s (\uu, \ur) + N_s (\ru, \ur).
\end{align*}
Subtracting the second equality from the fist one, we obtain $N_s (\uu, \ur) - N_s (\ru, \uu) = A$. Similarly,
\begin{align*}
N_s (\uu) = N_s (\nul, \uu) +N_s (\ru, \uu) +N_s (\uu, \uu), \\
N_s (\uu) = N_s (\uu, \nul) + N_s (\uu, \ur) + N_s (\uu, \uu).
\end{align*}
Subtracting the second equality from the first one we obtain the remaining equality from \eqref{eq:lemma4-eq1}. Applying already proved equalities \eqref{eq:lemma4-eq1} to the dual to $s$ state, we obtain that
$$
N_s (\ur, \rr) - N_s (\rr, \ru) = B, \quad  N_s (\all, \ru) - N_s (\ru, \all) = B, \quad  N_s (\rr, \all) - N_s (\all, \rr) = B,
$$
for some $B \in \mathbb Z$. Finally, Lemma \ref{lem:technical} gives $A=B$.
\end{proof}

Now we are ready to prove our main result.
\begin{theorem}
\label{th:main}
The Markov dynamics $\mathcal D$ preserves the Gibbs measure for the 6-vertex model on a torus.
\end{theorem}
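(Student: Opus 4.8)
The plan is to show that the Gibbs measure is stationary by verifying the master equation balance: for every state $s$, the total rate of probability flowing out of $s$ equals the total rate flowing in. Since the Gibbs measure assigns probability proportional to $w(s)$, this amounts to proving
\begin{equation*}
\sum_{s'} w(s)\, p(s \to s') = \sum_{s'} w(s')\, p(s' \to s).
\end{equation*}
The key tool is Lemma \ref{lem:flip-main}, which gives the flip identity $w(s_1)\, p(s_1 \to s_2) = w(\bar s_2)\, p(\bar s_2 \to \bar s_1)$. This does \emph{not} assert detailed balance (the dynamics is irreversible by design), but it lets me rewrite the incoming flux. First I would apply the flip bijection $s' \mapsto \bar s'$ to the right-hand sum; since flipping is an involution on $\s$ and, by Lemma \ref{lem:flip-main}, each term $w(s')\, p(s' \to s)$ equals $w(\bar s)\, p(\bar s \to \bar{s'})$, the incoming flux into $s$ equals $w(\bar s)$ times the total outgoing rate from $\bar s$. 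Using $w(\bar s) = w(s)$ from Lemma \ref{lem:flip-weight}, the balance condition reduces to showing that the \emph{total exit rate} from $s$ equals the total exit rate from $\bar s$:
\begin{equation*}
\sum_{s'} p(s \to s') = \sum_{s'} p(\bar s \to \bar{s'}).
\end{equation*}

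Next I would compute each side by summing over admissible pairs, since each jump is triggered by an $r$-admissible or $l$-admissible pair of $h$-adjacent vertices and the rate depends only on the type of that pair. The total exit rate from $s$ is a weighted count: each of the four types of $r$-admissible pair contributes its rate times the number $N_s(X,Y)$ of such pairs, and similarly for the four $l$-admissible types. Under the flip, an $h$-adjacent pair $(X,Y)$ in $s$ becomes an $h$-adjacent pair in $\bar s$ whose types are the flips of $Y$ and $X$ in reversed order (because the flip reverses the circular order); I would tabulate exactly which pair-type in $\bar s$ corresponds to which in $s$. So proving equality of the two total exit rates becomes an identity among the pair-counts $N_s(X,Y)$ with the jump rates as coefficients.

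This is where Lemmas \ref{lem:technical} and \ref{lem:count-of-pairs} do the real work, and this step is the main obstacle. The naive term-by-term matching fails: flipping a right-jump pair need not give a right-jump pair of the same rate, so the rate-weighted sums over $s$ and over $\bar s$ differ term by term and only agree in aggregate. I would group the admissible pairs according to the rate they carry. Recall the rates come in reciprocal pairs, and the four right-rates pair up as $\sqrt{w(\rr)w(\uu)/w(\all)w(\nul)}$ with its inverse and $w(\ru)w(\ur)/\sqrt{w(\all)w(\nul)w(\uu)w(\rr)}$ with its inverse. For the rate-weighted exit sums from $s$ and $\bar s$ to coincide, I expect the required identity to be precisely a linear combination of the relations in \eqref{eq:lemma4-eq1} and \eqref{eq:lemma4-eq2}, which equate various differences $N_s(X,Y) - N_s(Y',X')$ to the common value $A$, together with \eqref{eq:numb-vert-techn} that pins $A=B$ and makes the bookkeeping consistent across the two families of relations.

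Concretely, I would write out the difference $\sum_{s'} p(s \to s') - \sum_{s'} p(\bar s \to \bar{s'})$ as an explicit sum of terms $r_i\,\bigl(N_s(X_i,Y_i) - N_{\bar s}(\cdot,\cdot)\bigr)$ over the eight admissible-pair types, translate each $N_{\bar s}$-count back into an $N_s$-count via the flip correspondence from the previous paragraph, and then substitute the Lemma \ref{lem:count-of-pairs} identities to collapse every bracket to a multiple of $A$. The coefficients of $A$ coming from the right-jump families and the left-jump families should cancel in pairs, owing to the reciprocal structure of the rates, leaving zero. Once the total exit rates match, the master equation balance holds and stationarity of the Gibbs measure follows. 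The delicate point throughout is the careful identification, under the flip and the reversed circular order, of which oriented pair-type maps to which — getting this dictionary right is what turns the combinatorial identities of the preceding lemmas into the exact cancellation needed.
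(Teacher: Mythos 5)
Your proposal is correct and follows essentially the same route as the paper: reduce stationarity to the balance equation, use Lemma \ref{lem:flip-main} together with $w(\bar s)=w(s)$ to convert the incoming flux into the total exit rate from $\bar s$, and then verify the equality of exit rates by a rate-weighted count of admissible pairs using Lemmas \ref{lem:technical} and \ref{lem:count-of-pairs}, with the final cancellation between the right-jump and left-jump contributions exactly as in the paper's display \eqref{eq:final-count}.
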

\begin{proof}
Let $\mathcal P^+$ be the set of all pairs of states $s_1, s_2 \in \s$ such that $p(s_1 \to s_2)$ is strictly positive. It is enough to prove that for each $s \in \s$ we have
\begin{equation}
\label{eq:in-equals-out}
\sum_{s_1: (s, s_1) \in \mathcal P^+} w(s) p(s \to s_1) = \sum_{s_2: (s_2,s) \in \mathcal P^+} w(s_2) p(s_2 \to s).
\end{equation}
Using Lemma \ref{lem:flip-main}, we obtain
$$
\sum_{s_2: (s_2,s) \in \mathcal P^+} w(s_2) p(s_2 \to s) = \sum_{\bar s_2: (\bar s, \bar s_2) \in \mathcal P^+} w(\bar s) p( \bar s \to \bar s_2).
$$
Taking into account Lemma \ref{lem:flip-weight} and the equation above, we reduce \eqref{eq:in-equals-out} to
\begin{equation}
\label{eq:in-equals-in}
\sum_{s_1: (s, s_1) \in \mathcal P^+} p(s \to s_1) = \sum_{\bar s_2: (\bar s, \bar s_2) \in \mathcal P^+} p( \bar s \to \bar s_2).
\end{equation}
Recall that the jump rates are determined in an explicit and rather simple way, and
note that Lemma \ref{lem:count-of-pairs} provides us with the required information for comparing the left-hand and the right-hand sides of \eqref{eq:in-equals-in}. Indeed, all quantities in Lemma 4 compare the number of the same admissible pairs in a state and its flip. Taking into account all admissible pairs, we get
\begin{multline}
\label{eq:final-count}
\sum_{s_1: (s, s_1) \in \mathcal P^+} p(s \to s_1) -  \sum_{\bar s_2: (\bar s, \bar s_2) \in \mathcal P^+} p( \bar s \to \bar s_2)
\\ = \left( A \sqrt{ \frac{ w(\rr) w(\uu)}{w(\all) w(\nul)}} - A \frac{w (\ru) w(\ur)}{\sqrt{w(\all) w(\nul) w(\rr) w(\uu)}} + A \sqrt{ \frac{ w(\all) w(\nul)}{w (\rr) w(\uu)}} \right) \\ + \left( - A \sqrt{ \frac{ w(\all) w(\nul)}{w (\rr) w(\uu)}} + A \frac{w (\ru) w(\ur)}{\sqrt{w(\all) w(\nul) w(\rr) w(\uu)}} - A \sqrt{ \frac{ w(\rr) w(\uu)}{w(\all) w(\nul)}} \right).
\end{multline}
The first parentheses come from the jumps to the right; the second parentheses come from the jumps to the left.
Indeed, the first term in the first parentheses comes from the comparison of $h$-adjacent vertices $(\ru, \nul)$ (Figure \ref{fig:rightJump-1}) in a state and in its flip, the second term in the first parentheses comes from $(\uu, \nul)$ (Figure \ref{fig:rightJump-2}), and the third term from the first parentheses comes from $(\uu, \ru)$ (Figure \ref{fig:rightJump-4}). The pair $(\ru, \ur)$ (Figure \ref{fig:rightJump-3}) does not contribute because it is invariant under the flip.

Analogously, the first term in the second parentheses comes from the comparison of $h$-adjacent vertices $(\rr,\ru)$ (Figure \ref{fig:leftJump-1}) in a state and in its flip, the second term in the second parentheses comes from $(\rr, \all)$ (Figure \ref{fig:leftJump-3}), and the third term from the first parentheses comes from $(\ur, \all)$ (Figure \ref{fig:leftJump-4}). The pair $(\ur, \ru)$ (Figure \ref{fig:leftJump-2}) does not contribute because it is invariant under the flip.

It is clear that the total sum in \eqref{eq:final-count} equals 0, which concludes the proof of the theorem.
\end{proof}

\begin{remark}
The computations above shows that in the general case we need both jumps to the right and to the left in order to obtain the invariance of the Gibbs measure. However, at the free-fermion point (i.e., in the case $w(\ru) w(\ur) -w(\all) w(\nul) - w(\rr) w(\uu) = 0$) the dynamics can be decomposed into two parts; the part of the dynamics that corresponds to jumps only to the right still preserves the Gibbs measure (and the part of the dynamics that only jumps to the left as well). This is clearly visible from \eqref{eq:final-count} --- both parentheses vanish.
\end{remark}

\section{Appendix: The origin of dynamics}

In this section we briefly describe the origin of the dynamics $\mathcal D$.

Let $u$ and $q$ be two reals, and consider the following weights of the 6 types of vertices:
\begin{multline}
\label{eq:weights-Bor}
w(\nul)=1, \qquad w(\all) = \frac{u-\sqrt{q}}{1 - \frac{u}{\sqrt{q}} }, \qquad w( \rr) = \frac{u-\frac{1}{\sqrt{q}}}{1 - \frac{u}{\sqrt{q}} }, \\
w(\uu) = \frac{1 - \sqrt{q} u}{1 - \frac{u}{\sqrt{q}} }, \qquad w(\ru) = \frac{1-q}{1 - \frac{u}{\sqrt{q}} }, \qquad w(\ur) = \frac{(1-q^{-1})u}{1 - \frac{u}{\sqrt{q}} }.
\end{multline}
These weights coincide with weights from \cite[Definition 2.1]{Bor} with $s=q^{-1/2}$.

Let $\lambda = (\lambda(1), \lambda(2), \dots, \lambda(N))$, $\la (1) > \la(2) > \dots > \la (N)$, be a $N$-tuple of decreasing nonnegative integers.
Let $\Si_N$ be the set of all such $N$-tuples ($\la$'s are called \emph{signatures}).
We say that $\la_{N} \in \Si_{N}$ and $\la_{N+1} \in \Si_{N+1}$ \textit{interlace} and write $\la_N \prec \la_{N+1}$ if
$$
\la_{N+1} (1) \ge \la_N (1) \ge \la_{N+1} (2) \ge \dots \ge \la_N (N) \ge \la_{N+1} (N+1).
$$
Define a \textit{triangular array} of particles as the collection $\la_1 \prec \la_2 \prec \dots \prec \la_N$ (it is usually referred to as a \textit{Gelfand-Tsetlin scheme} due to its relation to the branching of the irreducible representations of the unitary groups).

For $\lambda \in \Si_N$ let $F_{\lambda}$ and $G_{\lambda}$ be the rational symmetric functions introduced in \cite{Bor}. They obey an analog of the Cauchy identity (\cite[Equation (4.6)]{Bor} and its skew versions). This allows us to apply a general construction of dynamics on triangular arrays which samples from a certain probability measure. The idea was introduced in \cite{BorFer}, \cite{Bor-schur} for the Schur symmetric polynomials and later applied in \cite[Section 2.3]{BorCor} to Macdonald polynomials. The exposition in \cite[Section 2.3]{BorCor} can be applied to the case of $F$- and $G$- functions directly.

One of the main properties of such a dynamics is that it samples from a probability measure on a triangular array. Let $K$ be a positive integer which will play a role of time in our dynamics, and let $x$ be a real fixed parameter. It is possible to sample a probability measure such that for each $N$ the projection to $\Si_N$ of this measure has the form
$$
const \cdot F_{\lambda} ( \underbrace{x,x,\dots, x}_{N}) G_{\lambda} (\underbrace{u,u,\dots, u}_K)
$$
(in fact, the joint probabilities for different $N$ also have a prescribed form which is similar to the ascending Macdonald process, see \cite[Section 1.1]{BorCor}). Moreover, this dynamics allows to make a step $K \to K+1$ (that is, starting with the probability measure corresponding to $K$ obtain the probability measure corresponding to $K+1$) by rules which involve only local interaction of particles of the triangular array. We denote by $\la_N^{K}$ the (random) signature from $\Si_N$ at time $K$.

Due to the combinatorial formula for $F$- functions (see \cite[Definition 3.1]{BorFer}) this dynamics has a rather simple form. Let us further assume that $u= q^{-1/2} + \eps$, for $0<\eps \ll 1$. Then the vertices $\uu$ and $\rr$ have small weights proportional to $\eps$ (see \eqref{eq:weights-Bor}). This implies that with large probability the step $K \to K+1$ of our dynamics on each level looks like $\la_N^{K+1} (i) = \la_N^{K} (i) +1$, $i=1,2, \dots, N$, and the events $\la_N^{K+1} (i) = \la_N^{K} (i)$ and $\la_N^{K+1} (i) = \la_N^{K} (i)+2$ happen with probability proportional to $\eps$, while all other events can be considered as impossible (these two $\eps$-probability events eventually give rise to the left and right jumps in dynamics $\mathcal D$). We are interested in the limit of this dynamics for $N=[\eps^{-1}]$, $\eps \to 0$. If we consider this dynamics ``in the bulk'', then our formulas for probability measures (heuristically) imply the preservation of the Gibbs measure on the set of paths. The dynamics on the triangular array is governed by local interactions only. Therefore, it is reasonable to assume that ``boundary conditions'' do not affect the dynamics in the bulk, and that after the limit $\eps \to 0$ the dynamics in the bulk will preserve the Gibbs measure on a six-vertex model on torus. This paper describes the limit dynamics and proves that it has such a property.

\end{document}